\documentclass{article}

\usepackage[english]{babel}

\usepackage[letterpaper,top=2cm,bottom=2cm,left=3cm,right=3cm,marginparwidth=1.75cm]{geometry}

\usepackage{amsmath}
\usepackage{amsfonts}
\usepackage{graphicx}
\usepackage{amssymb}
\usepackage{algorithm}
\usepackage{algpseudocode}
\usepackage{amsthm}
\usepackage[colorlinks=true, allcolors=blue]{hyperref}
\usepackage{xcolor}

\theoremstyle{definition}
\newtheorem{definition}{Definition}

\newtheorem{theorem}{Theorem}[section]
\newtheorem{lemma}[theorem]{Lemma}
\newtheorem{corollary}[theorem]{Corollary}
\newtheorem{remark}{Remark}

\newcommand{\E}{\mathbb{E}}
\renewcommand{\P}{\mathbb{P}}

\newcommand{\pG}{\mathcal{\psi}}

\newcommand{\rH}{\mathcal{H}}

\title{Achievability of Heterogeneous Hypergraph Recovery from its Graph Projection}
\author{Alexander Morgan \thanks{xmorgan@mit.edu} \and Chenghao Guo \thanks{chenghao@mit.edu}}
\date{}

\begin{document}
\maketitle

\begin{abstract}
  We formulate and analyze a heterogeneous random hypergraph model, and we provide an achieveability result for recovery of hyperedges from the observed projected graph. 
  We observe a projected graph which combines random hyperedges across all degrees, where a projected edge appears if and only if both vertices appear in at least one hyperedge. Our goal is to reconstruct the original set of hyperedges of degree $d_j$ for some $j$. 
  Our achievability result is based on the idea of selecting maximal cliques of size $d_j$ in the projected graph, and 
  we show that this algorithm succeeds under a natural condition on the densities.
  This achievability condition generalizes a known threshold for $d$-uniform hypergraphs with noiseless and noisy projections. We conjecture the threshold to be optimal for recovering hyperedges with the largest degree. 
\end{abstract}

\section{Introduction}

The problem of hyperedge recovery naturally arises as part of a larger class of problems in community detection. Common community detection problems involve the study of pairwise relations in settings such as friendship connections and social networks, where connections are formed between any pair of vertices sharing a common latent cluster. Inspired by real-world social networks, significant attention has been given to graphs with group structures, such as random intersection graphs \cite{BBN20, vdHKV21}. Additionally, researchers have extensively studied the hypergraph stochastic block model \cite{ACKZ15, DWar, GD17, GJ23, GP23, GP24, SZ22}.

This problem has applications in real-world settings like scientific co-authorship networks \cite{New04}. The theoretical study was initiated in recent works \cite{reconstruct, bresler2025partial} which determined the sharp threshold for reconstructing a hypergraph from its graph projection in a noiseless setting. Beyond theoretical bounds, algorithmic methods have been developed to address the problem of reconstruction from projected graphs, which has been extensively studied both theoretically and empirically in recent years. These include Bayesian approaches to sample from the posterior distribution given the projected graph \cite{LYA23, YPP21} , scoring methods based on sampled hypergraphs \cite{WK24} , and foundational models to recover weighted hypergraphs \cite{CFLL24}. A comprehensive overview of this model and its various applications in real networks can be found in \cite{bresler2025partial}.

However, the setting discussed in \cite{reconstruct,bresler2025partial} is far from realistic. Real networks have noise, and more importantly, have \emph{interactions of different sizes}. The noisy setting is discussed in \cite{gong2025detection}, which can be viewed as a mixture of degree-2 and degree-$d$ hyperedges. In their work, the observed graph is a noisy projection where true edges are kept with probability $p$ and non-edges are added with probability $q$. Their results established sharp thresholds for both detection and reconstruction, notably revealing a detection-reconstruction gap phenomenon in this problem. 

General heterogeneous hypergraphs is a more challenging problem to study. Here, we analyze the problem of recovery of a heterogeneous hypergraph from its projected graph. In particular, we study the following model.
\begin{definition}[Heterogeneous Random Hypergraph]\label{def:heterogeneous_random_hypergraph}
    Consider a set of vertices $V = [n]$, and fix hyperedge degrees $2 \leq d_1 < \dots < d_k$ with corresponding $\delta_1, \dots, \delta_k$, where $0 < \delta_j < 1$. Assume each hyperedge of degree $d_j$ appears with identical probability $p_j = n^{1 - d_j + \delta_j + o(1)}$ and that all hyperedges across all degrees appear independently. An edge appears in the projected graph if and only if both vertices in the edge appear together in at least one hyperedge.
\end{definition}

Having observed the projected graph due to hyperedges across all degrees, we attempt to recover hyperedges of degree $d_j$. Suppose $\rH_j$ is the true set of hyperedges of degree $d_j$, and $\widehat{\rH}_j$ is the set we estimate. Then, we consider recovery successful if the symmetric difference between these two sets has size that is $o(1)$ with respect to the expected size of $\rH_j$. In particular, our goal is to achieve
\begin{equation*}
    \E\left[\left|\rH_j \triangle \widehat{\rH}_j\right|\right] = o(1) \E\left[\left|\rH_j\right|\right],
\end{equation*}
which we use as our criterion for successful recovery.

\subsection{Results}
We analyze the performance of the following algorithm. Here, \emph{maximal clique} refers to a clique that is not a subset of a larger clique.
\begin{algorithm}[H]
\caption{Maximal Clique Estimation}\label{alg:maximal_clique_estimation}
\begin{algorithmic}
\State Estimate the hyperedges of degree $d_j$ as the set of all maximal cliques of size $d_j$.
\end{algorithmic}
\end{algorithm}

Under the heterogeneous hypergraph model as stated in Definition \ref{def:heterogeneous_random_hypergraph}, we will show that recovery of hyperedges of degree $d_j \geq 3$, i.e., recovery of $1 - o(1)$ hyperedges in expectation, is possible if
\begin{equation}\label{eqn:key_inequality}
    \delta^* \triangleq \max\{\delta_1, \dots, \delta_k\} < \frac{d_j - 2}{d_j} + \frac{2 \delta_j}{d_j(d_j - 1)}
\end{equation}
by applying Algorithm \ref{alg:maximal_clique_estimation}. In particular, we'll show the following result.

\begin{theorem}\label{theorem:maximal_clique_recovery}
    Consider the model specified in Definiton \ref{def:heterogeneous_random_hypergraph}. Fix a $j$ with $d_j \ge 3$, and suppose Equation \ref{eqn:key_inequality} holds for this $j$. Let $\rH_j$ be the set of true hyperedges of degree $d_j$. Estimating $\rH_j$ by applying Algorithm \ref{alg:maximal_clique_estimation} to the projected graph $\pG$ yielding $\widehat{\rH}_j$ succeeds, i.e., 
    \begin{equation*}
        \E\left[\left|\rH_j \triangle \widehat{\rH}_j\right|\right] = o(1) \E\left[\left|\rH_j\right|\right].
    \end{equation*}
\end{theorem}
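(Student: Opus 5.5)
We split the symmetric difference into false negatives, $\rH_j \setminus \widehat{\rH}_j$, and false positives, $\widehat{\rH}_j \setminus \rH_j$, and bound the expected size of each by $o(1)\,\E|\rH_j|$. Note that $\E|\rH_j| = \binom{n}{d_j} p_j = n^{1+\delta_j+o(1)}$.

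\textbf{False negatives.} A true hyperedge $h \in \rH_j$ always forms a clique in $\pG$. It is missed only if it is not a maximal clique, that is, only if some vertex $v \notin h$ is adjacent in $\pG$ to all $d_j$ vertices of $h$. Condition on $h$ being present and use independence of all the other hyperedges. For each vertex $u$, its degree from hyperedges of size $d_i$ has expectation $n^{d_i-1}p_i = n^{\delta_i+o(1)}$, so the probability that a fixed pair $(u,v)$ is covered is $n^{\delta^*-1+o(1)}$.

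The event that $v$ is adjacent to every vertex of $h$ is covered by a union over configurations of covering hyperedges. Each configuration is a collection of at most $d_j$ hyperedges, each containing $v$ and at least one vertex of $h$; a single hyperedge may cover several vertices of $h$.
\begin{itemize}
\item If all $d_j$ pairs are covered by distinct hyperedges, the probability is roughly $n^{d_j(\delta^*-1)}$. Summing over $n$ choices of $v$ gives $n^{1-d_j+d_j\delta^*}$.
\item If one hyperedge $e$ of size $d_i$ contains $v$ together with $m \ge 2$ vertices of $h$, its cost is $n^{d_i-1-m}p_i = n^{\delta_i-m+o(1)}$. The remaining vertices are covered separately.
\end{itemize}
The general term is bounded by $n \prod_e n^{\delta-|e\cap h|}$ with $\sum|e\cap h| = d_j$, which is at most $n^{1-d_j+d_j\delta^*}$. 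Merging only helps, since $\delta-m \le m(\delta-1)$ for $m\ge 1$. This is $o(1)$ exactly when $\delta^* < (d_j-1)/d_j$, and \eqref{eqn:key_inequality} implies this because $\delta_j<1$. The missed fraction is therefore $o(1)$. One more configuration must be covered: a hyperedge $e \supsetneq h$ of larger degree. Its probability is at most $n^{d_i-d_j}p_i = n^{1-d_j+\delta_i}$, which is $o(1)$ since $d_j\ge 3$.

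\textbf{False positives.} This is the main obstacle. We must show that the expected number of $d_j$-cliques in $\pG$ that are not hyperedges in $\rH_j$ is $o(n^{1+\delta_j})$; we will not even need maximality. Any clique $S$ on $d_j$ vertices is covered by a collection of hyperedges, each meeting $S$ in at least two vertices. We enumerate by the ``covering structure'': a multigraph of traces $e\cap S$ that together cover all $\binom{d_j}{2}$ pairs of $S$. The expected count for a structure is $n^{d_j}\prod_e n^{\delta_e - |e\cap S|+o(1)}$, using $n^{d-t}p = n^{1+\delta-t}$ for the number of ways to complete a hyperedge meeting $S$ in a prescribed $t$-set.

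There are finitely many structures, so it suffices to bound the worst one. Two extremes are natural candidates:
\begin{itemize}
\item All $\binom{d_j}{2}$ pairs are covered by separate hyperedges, each meeting $S$ in exactly an edge. This gives
\[
n^{d_j} \cdot n^{\binom{d_j}{2}(\delta^*-1)}.
\]
\item Some traces have size $d_j-1$ and the rest are single edges. For example, one $d_j$-subset minus a vertex is covered by a single hyperedge of degree $d_j$, at cost $n^{\delta_j-(d_j-1)}$, and the remaining $d_j-1$ pairs are covered separately.
\end{itemize}
The second structure is the one that produces \eqref{eqn:key_inequality}. Requiring that a clique $S$ of size $d_j$ be mostly generated by pairwise edges, compared against $n^{1+\delta_j}$, gives $n^{d_j+\binom{d_j}{2}(\delta^*-1)} = o(n^{1+\delta_j})$. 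This rearranges to
\[
\delta^* < 1 - \frac{2(d_j-1-\delta_j)}{d_j(d_j-1)} = \frac{d_j-2}{d_j} + \frac{2\delta_j}{d_j(d_j-1)},
\]
which is exactly \eqref{eqn:key_inequality}. So the all-pairs structure is the critical one.

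The remaining work is a convexity or exchange argument showing every other structure is dominated. Replacing a hyperedge with trace size $t\ge 3$ by its $\binom{t}{2}$ pairwise covers changes the exponent from $\delta-t$ to $\binom{t}{2}(\delta^*-1)$. For structures with a large trace we instead exploit the fact that the contribution is computed relative to the baseline $n^{1+\delta_j}$. I will handle these by induction on the number of vertices of $S$ outside the largest trace. Each vertex outside a trace $T$ needs $|T|$ pairs covered. The cost of attaching it is $n\cdot n^{|T|(\delta^*-1)}$ or better, and this is $\le 1$ when $\delta^*\le (d_j-1)/d_j$ and $|T|\ge d_j-1$, or more generally after comparing to the pairwise count. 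Checking this uniformly over all structures is the delicate step.
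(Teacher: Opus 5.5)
Your false-negative half is sound and is essentially the paper's argument: the merging inequality $\delta-m\le m(\delta-1)$ is what Lemma~\ref{lemma:optimal_star_cover} proves, and it gives the same bound $n^{1-d_j+d_j\delta^*+o(1)}=o(1)$. The false-positive half has two genuine gaps.

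First, your claim that you ``will not even need maximality'' is false. Consider the covering structure consisting of a single hyperedge $e\supsetneq S$ of degree $d_r>d_j$. It contributes $n^{d_j}\cdot n^{1+\delta_r-d_j}=n^{1+\delta_r+o(1)}$ expected $d_j$-cliques outside $\rH_j$. This is not $o(n^{1+\delta_j})$ whenever $\delta_r\ge\delta_j$, and \eqref{eqn:key_inequality} permits that. For example, take $d_j=3$, $\delta_j=0.3$, $d_r=4$, $\delta_r=0.4$; the condition reads $\delta^*<\tfrac13+\tfrac{\delta_j}{3}\approx 0.433$. So for the unrestricted count, the all-pairs structure is not the worst one. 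Maximality is exactly what removes this term. A maximal clique $S$ that is not a hyperedge lies in no hyperedge, so every trace $e\cap S$ has size at most $d_j-1$. That is why the paper sets $\tilde{\mathbf{\Delta}}_{d_j}=-\infty$ and considers only covers with $|\mathcal{U}|\ge 2$.

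Second, even with that restriction, you never show that every cover of $\binom{S}{2}$ by traces of size at most $d_j-1$ is dominated. You defer it as ``the delicate step,'' but it is the main combinatorial content of the proof (Lemma~\ref{lemma:optimal_cover_without_entire_clique}). Your exchange idea cannot do it alone. Replacing a trace of size $t$ by its $\binom t2$ pairs raises the exponent only when $\binom t2(\delta^*-1)\ge 1+\delta^*-t$, i.e.\ $\delta^*\ge\frac{t-1}{t+1}$. For smaller $\delta^*$, large traces are more likely than their pairwise decompositions, so intermediate covers (overlapping large traces, traces of sizes $3,\dots,d_j-2$) must be compared directly.

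Your proposed induction does not close this. It charges each vertex outside the largest trace $T$ a cost $n\cdot n^{|T|(\delta^*-1)}$, which ignores pairs among the outside vertices and hyperedges shared by several of them. It also misstates when this cost is at most $1$: for $|T|=d_j-1$ that needs $\delta^*\le\frac{d_j-2}{d_j-1}$, not $\delta^*\le\frac{d_j-1}{d_j}$.

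The paper handles this in two steps. For covers with $M\ge d_j$ sets it uses a convex relaxation in the trace sizes; the resulting bound is convex in $M$, so it is extremal at $M=d_j$ or $M=\binom{d_j}{2}$. For $M<d_j$ it uses an induction on $d_j$ that deletes a carefully chosen vertex. Together these give maximum exponent $\max\{d_j\delta^*-2d_j+3,\binom{d_j}{2}(\delta^*-1)\}$.

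You would then still need to check the clique-plus-star term $3-d_j+d_j\delta^*<1+\delta_j$. It follows from \eqref{eqn:key_inequality} via $\frac{2\delta_j}{d_j-1}\le\delta_j$ for $d_j\ge3$. So, contrary to your first remark, that structure is not the one that produces \eqref{eqn:key_inequality}. As a minor point, your general count should read $n^{d_j}\prod_e n^{1+\delta_e-|e\cap S|}$; the $+1$ is missing.
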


Cliques of size $d_j$ can appear due to a true hyperedge, but they can also appear due to hyperedges of various degrees combining to form a cover of that clique in the projected hypergraph. One way this can occur is if all $\binom{d_j}{2}$ edges are individually covered due to ``noise'' hyperedges. Other configurations are also possible. However, we will show that if we can guarantee that the density of cliques of size $d_j$ due to the all-edges noise configuration is dominated by the density of true hyperedges of size $d_j$, i.e., 
\begin{equation}\label{eqn:key_inequality_density_interpretation}
    -\binom{d_j}{2} + \binom{d_j}{2} \delta^* < 1 - d_j + \delta_j,
\end{equation}
where
\begin{equation*}
    \delta^* \triangleq \max\{\delta_1, \dots, \delta_k\},
\end{equation*}
then Algorithm \ref{alg:maximal_clique_estimation} succeeds. In particular, the all-edges noise configuration is the most detrimental configuration we need to consider.

Note that Equation \ref{eqn:key_inequality} is equivalent to Equation \ref{eqn:key_inequality_density_interpretation}.
Simultaneous recovery of all hyperedges $d_j \geq 3$ is possible if all required inequalities in Equation \ref{eqn:key_inequality} are simultaneously satisfied across $j$. By extracting the requirement from Equation \ref{eqn:key_inequality} that
\begin{equation*}
    \delta_j < \frac{d_j - 2}{d_j} + \frac{2 \delta_j}{d_j(d_j - 1)},
\end{equation*}
which is equivalent to the condition
\begin{equation*}
    \delta_j < \frac{d - 1}{d + 1},
\end{equation*}
we see that our achievability result naturally generalizes the $\frac{d - 1}{d + 1}$ threshold in the $d$-uniform hypergraph setting. But we emphasize that this is an achievability result, and better performance may be possible for recovery of hyperedges of degree $d_j < d_k$, for example.

In the $d$-uniform hypergraph setting, the two dominating configurations leading to cliques of size $d$ are the true hyperedge configuration and the all-edges noise configuration. These competing configurations lead to the $\frac{d - 1}{d + 1}$ threshold. Theorem \ref{theorem:maximal_clique_recovery} states that achievability extends for a generalized version of this idea, where noise may now occur due to hyperedges of different degrees.

\section{Notation}
We use $d$ to denote the degree of a hyperedge. We use $p$ to denote the probability a particular hyperedge appears, where $p = n^{1 - d + \delta + o(1)}$ with $0 < \delta < 1$ as a fixed parameter tied to $d$. 

In the heterogeneous random hypergraph model, we consider $k$ such degrees, i.e., we let $2 \leq d_1 < \cdots < d_k$, where hyperedges of degree $d_j$ appear independently with identical probability $p_j = n^{1 - d_j + \delta_j + o(1)}$, with $0 < \delta_j < 1$. Hyperedges of different degrees are assumed to appear independently as well.

We use $\rH$ to denote a random and possibly heterogeneous hypergraph, and $\rH_j$ to denote the random hypergraph consisting of only hyperedges of degree $d_j$ under the heterogeneous model. We observe $\pG$, which is the random projected hypergraph due to $\rH$.

We use $E$ for edge sets and $V$ for vertex sets. Also, for convenience, $[d] = \{1, \dots, d\}$. Given a set of edges $E$, we use $\mathcal{U}$ to denote a cover for these edges. In particular, let $V$ denote the set of vertices spanned by $E$. Then, $\mathcal{U} \subset 2^{V}$ satisfies
\begin{itemize}
    \item
    For all $u \in \mathcal{U}$, $|u| \geq 2$, and $E \not\subset \bigcup_{u' \in \mathcal{U}\setminus\{u\}} \binom{u'}{2}$
    \item
    $E \subset \bigcup_{u \in \mathcal{U}} \binom{u}{2}$.
\end{itemize}
Finally, $\text{Bin}(n, p)$ denotes the binomial distribution, and we will use $A$ to denote a probabilistic event, where the specific event being considered will made clear from context.

\section{Preliminiaries}
We will need the following basic probability facts. Proofs are left to the appendix.
\begin{lemma}\label{lemma:binom_union_bound_tight_general}
    Let $X_n \sim \text{Bin}\left(m_n, p_n\right)$ with $m_n \to +\infty$ and $m_n p_n \to 0$. Then,
    \begin{equation}\label{eqn:binom_union_bound_tight_general_1}
        \P(X_n \geq 1) = (1 - o(1)) m_n p_n.
    \end{equation}
    More specifically, 
    \begin{equation}\label{eqn:binom_union_bound_tight_general_2}
        \P(X_n \geq 1) \geq (1 - o(1)) m_n p_n
    \end{equation}
    and
    \begin{equation}\label{eqn:binom_union_bound_tight_general_3}
        \P(X_n \geq 1) \leq m_n p_n.
    \end{equation}
\end{lemma}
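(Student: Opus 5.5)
The plan is to compute $\P(X_n \ge 1) = 1 - (1-p_n)^{m_n}$ exactly and bound it on both sides by elementary inequalities.

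\emph{Upper bound.} Inequality \eqref{eqn:binom_union_bound_tight_general_3} is the union bound. Write $X_n = \sum_{i=1}^{m_n} \mathbf{1}_{B_i}$ with independent events $B_i$, each of probability $p_n$. Then $\P(X_n \ge 1) = \P(\bigcup_i B_i) \le \sum_i \P(B_i) = m_n p_n$. Equivalently, Markov's inequality gives $\P(X_n\ge 1)\le \E[X_n] = m_np_n$.

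\emph{Lower bound.} For \eqref{eqn:binom_union_bound_tight_general_2}, I will use $1-x \le e^{-x}$ to get $(1-p_n)^{m_n} \le e^{-m_np_n}$. This gives
\begin{equation*}
\P(X_n\ge 1) \ge 1 - e^{-m_n p_n}.
\end{equation*}
Next I apply $e^{-x} \le 1 - x + x^2/2$, valid for $x \ge 0$. With $x = m_np_n$ this yields
\begin{equation*}
\P(X_n\ge1) \ge m_np_n\left(1 - \tfrac{m_np_n}{2}\right).
\end{equation*}
Since $m_np_n \to 0$ by hypothesis, the factor $1 - m_np_n/2$ equals $1-o(1)$, which is \eqref{eqn:binom_union_bound_tight_general_2}.

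An alternative route is the Bonferroni inequality $\P(\bigcup B_i) \ge \sum_i \P(B_i) - \sum_{i<i'}\P(B_i\cap B_{i'})$. The second sum is at most $\binom{m_n}{2}p_n^2 \le (m_np_n)^2/2 = o(m_np_n)$, so it gives the same conclusion.

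\emph{Combining.} The upper and lower bounds together give \eqref{eqn:binom_union_bound_tight_general_1}.

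The hypothesis $m_n \to \infty$ is not actually needed in this argument. Only $m_np_n\to 0$ is used, so there is no real obstacle. The one point requiring care is checking that $e^{-x}\le 1-x+x^2/2$ holds for $x\ge0$. This follows from Taylor's theorem with remainder: the remainder is $-e^{-\xi}x^3/6 \le 0$ for some $\xi \in [0,x]$.
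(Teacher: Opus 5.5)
Your proof is correct. The upper bound is exactly the paper's: union bound or Markov.

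For the lower bound you take a different, though equally elementary, route.
\begin{itemize}
\item The paper keeps only the single-success event: $\P(X_n \ge 1) \ge \P(X_n = 1) = m_n p_n (1-p_n)^{m_n-1} \ge m_n p_n (1-p_n)^{m_n}$. It then asserts that $(1 - m_np_n/m_n)^{m_n} = 1 - o(1)$.
\item You work with the exact complement $1-(1-p_n)^{m_n}$ and bound it via $1-x\le e^{-x}$ and $e^{-x}\le 1-x+x^2/2$.
\end{itemize}
Your version yields an explicit factor $1 - m_np_n/2$ with no limit to evaluate. It also makes plain that only $m_np_n \to 0$ is used.

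The paper's rewriting as $(1 - m_np_n/m_n)^{m_n}$ points toward the classical exponential limit, which is where $m_n\to\infty$ would tacitly enter. There too, Bernoulli's inequality $(1-p_n)^{m_n} \ge 1 - m_np_n$ removes the need for that hypothesis.

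Your Bonferroni alternative is also valid. It is essentially the device the paper itself uses later, in the proof of the General Implied Hyperedge Probability lemma, to combine contributions from different degrees.
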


\begin{corollary}\label{corollary:binom_union_bound_tight}
    Let $X_n \sim \text{Bin}\left(n^{d - d' + o(1)}, n^{1 - d + \delta + o(1)}\right)$ with $\delta < 1$ and $2 \leq d' < d$. Then,
    \begin{equation*}
        \P(X_n \geq 1) = n^{1 - d' + \delta + o(1)}.
    \end{equation*}
    While not assumed here, we will later apply this for integer $d', d$.
\end{corollary}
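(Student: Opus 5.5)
The corollary should follow from Lemma \ref{lemma:binom_union_bound_tight_general} once we check its two hypotheses and then translate the factor $(1-o(1))$ into the $n^{o(1)}$ notation. Write $m_n = n^{d-d'+o(1)}$ and $p_n = n^{1-d+\delta+o(1)}$.

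First we check that $m_n \to +\infty$. Since $d' < d$ are fixed, the exponent $d - d' + o(1)$ is eventually at least $(d-d')/2 > 0$. Hence $m_n \ge n^{(d-d')/2} \to \infty$.

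Next we compute the product. Adding exponents, and noting that a sum of two $o(1)$ terms is $o(1)$,
\begin{equation*}
    m_n p_n = n^{(d-d') + (1-d+\delta) + o(1)} = n^{1-d'+\delta+o(1)}.
\end{equation*}
Because $d' \ge 2$ and $\delta < 1$, the leading exponent satisfies $1 - d' + \delta \le \delta - 1 < 0$, and this bound is a fixed constant independent of $n$. So for large $n$ the full exponent is at most $(\delta-1)/2 < 0$. Therefore $m_n p_n \le n^{(\delta-1)/2} \to 0$, and the second hypothesis of Lemma \ref{lemma:binom_union_bound_tight_general} holds as well.

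Applying Equation \ref{eqn:binom_union_bound_tight_general_1} gives $\P(X_n \ge 1) = (1-o(1))\, m_n p_n$. It remains to absorb the prefactor $(1-o(1))$ into the exponent. For large $n$ it lies in $[1/2, 1]$. Its logarithm base $n$ therefore lies in $[-\log 2/\log n,\, 0]$, which tends to $0$. So $(1-o(1)) = n^{o(1)}$, and hence
\begin{equation*}
    \P(X_n \ge 1) = n^{1-d'+\delta+o(1)}.
\end{equation*}

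The only point needing care is the use of the fixed gap $1-d'+\delta \le \delta-1 < 0$ to show that $m_n p_n \to 0$ despite the unspecified $o(1)$ terms in the exponents. This is exactly where the assumptions $\delta < 1$ and $d' \ge 2$ enter. Integrality of $d$ and $d'$ is never used, consistent with the remark in the statement.
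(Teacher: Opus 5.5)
Your proposal is correct and is essentially the paper's route: the paper gives no separate proof of this corollary and treats it as an immediate application of Lemma~\ref{lemma:binom_union_bound_tight_general}, which is exactly what you do. Your explicit checks that $m_n \to \infty$ (from $d' < d$), that $m_n p_n = n^{1-d'+\delta+o(1)} \to 0$ (from the fixed gap $1-d'+\delta \le \delta - 1 < 0$), and that the factor $1-o(1)$ equals $n^{o(1)}$ fill in the details the paper leaves implicit.
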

\begin{lemma}[Max Rate Dominates]\label{lemma:max_rate_dominates}
    Fix $r \geq 1$ and suppose $A_1, \dots, A_r$ are events with $\P(A_j) = n^{\alpha_j + o(1)}$ and $\alpha_j < 0$. Then,
    \begin{equation*}
        \P\left(\bigcup_{j = 1}^r A_j\right) = n^{\max_{j}{\alpha_j} + o(1)}.
    \end{equation*}
    Note that we don't require independence for this result.
\end{lemma}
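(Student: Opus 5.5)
The plan is to squeeze $\P\left(\bigcup_{j=1}^r A_j\right)$ between a lower bound and an upper bound that both have the form $n^{\alpha^* + o(1)}$, where $\alpha^* \triangleq \max_j \alpha_j$. Both bounds use only monotonicity of probability and the union bound, so no independence is needed, as the statement notes. The only thing to handle with care is the bookkeeping of the $o(1)$ terms in the exponents.

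First I make the hypothesis explicit. I write $\P(A_j) = n^{\alpha_j + \epsilon_j(n)}$, where $\epsilon_j(n) \to 0$ for each $j$. I fix an index $j^*$ with $\alpha_{j^*} = \alpha^*$.

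For the lower bound, $A_{j^*} \subset \bigcup_j A_j$ gives
\begin{equation*}
\P\left(\bigcup_{j=1}^r A_j\right) \geq \P(A_{j^*}) = n^{\alpha^* + \epsilon_{j^*}(n)}.
\end{equation*}

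For the upper bound, the union bound gives
\begin{equation*}
\P\left(\bigcup_{j=1}^r A_j\right) \leq \sum_{j=1}^r n^{\alpha_j + \epsilon_j(n)} \leq r\, n^{\alpha^* + \bar\epsilon(n)},
\end{equation*}
where $\bar\epsilon(n) \triangleq \max_j \epsilon_j(n)$. Since $r$ is fixed, $\bar\epsilon(n)$ is a maximum of finitely many vanishing sequences, so $\bar\epsilon(n) \to 0$. Also $r = n^{\log r/\log n}$ with $\log r / \log n \to 0$. So the right-hand side equals $n^{\alpha^* + \bar\epsilon(n) + \log r/\log n} = n^{\alpha^* + o(1)}$.

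Combining the two bounds, $\log \P\left(\bigcup_j A_j\right)/\log n - \alpha^*$ lies between $\epsilon_{j^*}(n)$ and $\bar\epsilon(n) + \log r/\log n$. Both of these tend to $0$, which is exactly the claim.

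There is no real obstacle here. The one point to state explicitly is that $r$ is fixed and does not depend on $n$. This is what makes the maximum of the $o(1)$ terms still $o(1)$, and the factor $r$ only $n^{o(1)}$. The hypothesis $\alpha_j < 0$ only guarantees these are genuine vanishing probabilities, so the bound is nontrivial; it is not used in the argument itself.
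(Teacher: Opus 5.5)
Your proof is correct and follows the paper's argument. The lower bound is $\P(\bigcup_j A_j) \geq \max_j \P(A_j)$, the upper bound is the union bound $\P(\bigcup_j A_j) \leq r \max_j \P(A_j)$, and the factor $r$ is absorbed into $n^{o(1)}$ because $r$ is fixed. Your explicit tracking of the $\epsilon_j(n)$ terms spells out what the paper leaves implicit.
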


\section{Outline}
Next, we formally prove Theorem~\ref{theorem:maximal_clique_recovery}. In Section~\ref{sec:prob-events}, we prove the probability of subgraphs appearing in the projected graph $\pG$ is controlled by an optimization problem $g$. In Section~\ref{sec:g}, we solve this optimization problem in core cases that we care about. The proof is concluded in Section~\ref{sec:proof-achievability}.

\section{Probabilities of Events on $\pG$}\label{sec:prob-events}
The goal of this section is to characterize the probability a certain subgraph $E$ appears in the projected graph, $\P(E\subset \pG)$.

The utility of Lemma \ref{lemma:binom_union_bound_tight_general} and Corollary \ref{corollary:binom_union_bound_tight} is that they allow us to infer the probability that cliques of smaller size $d'$ appear due to a single hyperedge of degree $d$. Lemma \ref{lemma:implied_hyperedge_probability} formalizes this.
\begin{lemma}[Implied Hyperedge Probability]\label{lemma:implied_hyperedge_probability}
    Suppose hyperedges of degree $d \geq 2$ appear i.i.d. with probability $p = n^{1 - d + \delta + o(1)}$, with $0 < \delta < 1$. Consider fixed (with respect to $n$) vertex sets $V, V'$, with $V' \subset V$ and $2 \leq d' = |V'| \leq d$. Let $A_{V'}$ denote the event that there exists a hyperedge that contains all nodes in $V'$ and none in $V \setminus V'$. Then,
    \begin{equation}\label{eqn:implied_hyperedge_probability_lemma}
        \P(A_{V'}) = n^{1 - d' + \delta + o(1)}.
    \end{equation}
    More specifically,
    \begin{align*}
        \P(A_{V'}) &= \P\left(\text{Bin}\left(\binom{n - |V|}{d - d'}, p\right) \geq 1\right)\\
        &= 1 - (1 - p)^{\binom{n - |V|}{d - d'}}\\
        &= (1 - o(1)) \binom{n - |V|}{d - d'} p,
    \end{align*}
    with the upper bound
    \begin{equation*}
        \P(A_{V'}) \leq \binom{n - |V|}{d - d'} p.
    \end{equation*}
    For other candidate vertex sets $V'' \subset V$ with $V'' \neq V'$ and $2 \le |V''| \le d$, $A_{V'}$ and $A_{V''}$ are independent since they depend on disjoint sets of hyperedges, hence the interpretation ``implied hyperedge probability.''
\end{lemma}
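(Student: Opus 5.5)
The plan is to identify the event $A_{V'}$ with a statement about a binomial count, and then read off the asymptotics from Lemma~\ref{lemma:binom_union_bound_tight_general} and Corollary~\ref{corollary:binom_union_bound_tight}.

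\textbf{Counting the relevant hyperedges.} A hyperedge of degree $d$ contains all of $V'$ and none of $V\setminus V'$ exactly when it has the form $V'\cup S$. Here $S$ is a $(d-d')$-subset of $[n]\setminus V$. There are $m_n=\binom{n-|V|}{d-d'}$ such candidate hyperedges, and they are distinct. Each one appears independently with probability $p$. So the number of them present is distributed as $\text{Bin}(m_n,p)$, and $A_{V'}$ is the event that this count is at least one. This gives
\begin{equation*}
\P(A_{V'})=1-(1-p)^{m_n}.
\end{equation*}
The upper bound $\P(A_{V'})\le m_n p$ follows from a union bound, i.e.\ Equation~\ref{eqn:binom_union_bound_tight_general_3}.

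\textbf{Asymptotics.} Since $|V|$ and $d$ are fixed, $m_n=n^{d-d'+o(1)}$, and therefore
\begin{equation*}
m_n p = n^{1-d'+\delta+o(1)}.
\end{equation*}
This tends to $0$ because $d'\ge 2$ and $\delta<1$.

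If $d'<d$, then $m_n\to\infty$. Lemma~\ref{lemma:binom_union_bound_tight_general} then gives $\P(A_{V'})=(1-o(1))m_np$, which is Corollary~\ref{corollary:binom_union_bound_tight}.

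The only obstacle is the edge case $d'=d$, where $m_n=1$ and Lemma~\ref{lemma:binom_union_bound_tight_general} does not apply. This case is direct: $\P(A_{V'})=p=(1-o(1))\,m_n p$ trivially, which equals $n^{1-d+\delta+o(1)}$ as required.

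\textbf{Independence.} Take $V''\subset V$ with $V''\ne V'$. Any hyperedge counted for $A_{V'}$ intersects $V$ in exactly $V'$, and any hyperedge counted for $A_{V''}$ intersects $V$ in exactly $V''$. So the two families of candidate hyperedges are disjoint. Independence of hyperedges then gives independence of $A_{V'}$ and $A_{V''}$.
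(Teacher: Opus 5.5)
Your proposal is correct and matches the paper's proof. Like the paper, you count the $\binom{n-|V|}{d-d'}$ candidate hyperedges $V'\cup S$, handle $d'=d$ directly, and for $d'<d$ invoke Lemma~\ref{lemma:binom_union_bound_tight_general} and Corollary~\ref{corollary:binom_union_bound_tight}, after checking $m_n p\to 0$ (which uses $d'\ge 2$ and $\delta<1$). Your disjointness argument for the independence of $A_{V'}$ and $A_{V''}$ makes explicit a justification the paper only asserts, and it actually yields mutual independence over all distinct $V'\subset V$, which is the form later needed in Theorem~\ref{theorem:clique_density}.
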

\begin{proof}
    The case $d' = d$ is immediate. When $d' < d$, there are $\binom{n - |V|}{d - d'} = n^{d - d' + o(1)}$ candidate hyperedges contributing to $A_{V'}$, each independently appearing with identical probability $p = n^{1 - d + \delta + o(1)}$. The results follow from Lemma \ref{lemma:binom_union_bound_tight_general} and Corollary \ref{corollary:binom_union_bound_tight}.
\end{proof}
Next, we work toward the case of multiple degrees $d_1 < \dots < d_k$. This allows us to now derive a slightly more general version of Lemma \ref{lemma:implied_hyperedge_probability}.

\begin{lemma}[General Implied Hyperedge Probability]\label{lemma:general_implied_hyperedge_probability}
    Consider the model specified by Definition \ref{def:heterogeneous_random_hypergraph}. Again fix vertex sets $V, V'$, with $V' \subset V$ and $2 \leq d' = |V'| \leq d_k$. Let $A_{V'}$ denote the event that there exists a hyperedge (of any degree) that contains all nodes in $V'$ and none in $V \setminus V'$. Then,
    \begin{equation}\label{eqn:general_implied_hyperedge_probability_lemma}
        \P(A_{V'}) = n^{1 - d' + \Delta + o(1)}
    \end{equation}
    where
    \begin{equation*}
        \Delta \triangleq \max_{j : d_j \geq d'} \delta_j.
    \end{equation*}
    More specifically,
    \begin{align*}
        \P(A_{V'}) &= 1 - \prod_{j : d_j \geq d'} (1 - p_j)^{\binom{n - |V|}{d_j - d'}}\\
        &= (1 - o(1)) \sum_{j : d_j \geq d'} \binom{n - |V|}{d_j - d'} p_j,
    \end{align*}
    with the upper bound
    \begin{equation*}
        \P(A_{V'}) \leq \sum_{j : d_j \geq d'}\binom{n - |V|}{d_j - d'} p_j.
    \end{equation*}
    The comment in Lemma \ref{lemma:implied_hyperedge_probability} regarding other candidate vertex sets $V'' \subset V$ still applies since all hyperedges appear independently.
\end{lemma}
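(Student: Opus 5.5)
The plan is to reduce to the single-degree computation of Lemma \ref{lemma:implied_hyperedge_probability} and then combine across degrees using independence. First, identify exactly which hyperedges are relevant. A hyperedge of degree $d_j$ contains all of $V'$ and none of $V \setminus V'$ precisely when it is $V'$ together with $d_j - d'$ further vertices chosen from $[n] \setminus V$. Such a hyperedge exists only if $d_j \geq d'$, and for each such $j$ there are exactly $\binom{n - |V|}{d_j - d'}$ candidates. Hence
\begin{equation*}
A_{V'} = \bigcup_{j : d_j \geq d'} A^{(j)}_{V'},
\end{equation*}
where $A^{(j)}_{V'}$ is the corresponding event restricted to degree-$d_j$ hyperedges. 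The index set is nonempty because $d' \leq d_k$.

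Next, compute the exact formula. All hyperedges, across all degrees, are independent, and the candidate hyperedge sets for different $j$ are disjoint. So the complement of $A_{V'}$ is the event that none of the candidates appears, which has probability $\prod_{j : d_j \geq d'} (1 - p_j)^{\binom{n - |V|}{d_j - d'}}$. This gives the first displayed identity. Equivalently, the number of present candidates is a sum of independent binomials.

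For the asymptotics, apply Lemma \ref{lemma:implied_hyperedge_probability} to each degree. Since $A^{(j)}_{V'}$ depends only on degree-$d_j$ hyperedges, it gives $\P(A^{(j)}_{V'}) = (1 - o(1))\binom{n - |V|}{d_j - d'} p_j = n^{1 - d' + \delta_j + o(1)}$, with exponent negative because $d' \geq 2$ and $\delta_j < 1$. Lemma \ref{lemma:max_rate_dominates} applied to the finitely many events then yields $\P(A_{V'}) = n^{1 - d' + \Delta + o(1)}$ with $\Delta = \max_{j : d_j \geq d'} \delta_j$. The upper bound $\P(A_{V'}) \leq \sum_j \binom{n - |V|}{d_j - d'} p_j$ is the union bound combined with the upper bounds from Lemma \ref{lemma:implied_hyperedge_probability}.

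The step requiring the most care is the matching lower bound $(1 - o(1))\sum_j \binom{n-|V|}{d_j - d'} p_j$. I will write $S = \sum_j m_j p_j$ with $m_j = \binom{n - |V|}{d_j - d'}$. Each term is $n^{1 - d' + \delta_j + o(1)} \to 0$ and there are boundedly many terms, so $S \to 0$. Using $1 - p \leq e^{-p}$,
\begin{equation*}
\P(A_{V'}) \geq 1 - e^{-S} \geq S - S^2/2 = (1 - o(1)) S.
\end{equation*}
The one subtlety is the case $d_j = d'$, where $m_j = 1$ does not tend to infinity. This is harmless: the exponential bound needs only $S \to 0$, not $m_j \to \infty$.

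The independence remark follows as in Lemma \ref{lemma:implied_hyperedge_probability}. Distinct $V', V'' \subset V$ determine disjoint sets of candidate hyperedges, since a candidate meets $V$ exactly in $V'$. All hyperedges are mutually independent, so $A_{V'}$ and $A_{V''}$ are independent.
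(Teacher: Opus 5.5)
Your proof is correct, and its overall structure matches the paper's. Both split $A_{V'}$ by degree, get the rate from Lemma \ref{lemma:implied_hyperedge_probability} and Lemma \ref{lemma:max_rate_dominates}, and get the upper bound from the union bound.

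The one genuine difference is the $(1-o(1))$ lower bound. The paper uses Bonferroni's inequality together with independence across degrees,
\[
\P(A_{V'}) \ge \sum_j \P(\mathcal{A}_{j,V'})\Bigl[1 - \sum_{r>j}\P(\mathcal{A}_{r,V'})\Bigr].
\]
Into this it feeds the per-degree asymptotics $\P(\mathcal{A}_{j,V'}) = (1-o(1))m_jp_j$ from Lemma \ref{lemma:implied_hyperedge_probability}. Those rest on Lemma \ref{lemma:binom_union_bound_tight_general}, which needs $m_j\to\infty$, so $d_j=d'$ must be treated as a separate case.

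You instead bound the exact product formula directly: $1-\prod_j(1-p_j)^{m_j} \ge 1-e^{-S} \ge S - S^2/2$. This needs only $S\to 0$.

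What your version buys:
\begin{itemize}
\item It is more self-contained, since it bypasses the per-degree lower bounds entirely.
\item It handles $d_j=d'$ with no special case.
\item It gives an explicit error factor $1-S/2$.
\item The sandwich $(1-S/2)S \le \P(A_{V'}) \le S$, with $S=n^{1-d'+\Delta+o(1)}$, yields the rate claim directly from the sum.
\end{itemize}

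What the paper's version buys is reuse of its single-degree lemma as a black box, in the form of a generic second-order inclusion--exclusion bound for independent events. Both arguments ultimately rest on the same fact: each $m_jp_j = n^{1-d'+\delta_j+o(1)} \to 0$, and there are finitely many $j$.
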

\begin{proof}
    For any $j$ such that $d_j \geq d'$, let $\mathcal{A}_{j, V'}$ denote the event that there exists a hyperedge of degree $d_j$ that contains all nodes in $V'$ and none in $V \setminus V'$. Then, by Lemmas \ref{lemma:implied_hyperedge_probability} and \ref{lemma:max_rate_dominates},
    \begin{align*}
        \P(A_{V'}) &= \P\left(\bigcup_{j : d_j \geq d'}\mathcal{A}_{j, V'}\right)\\
        &= n^{1 - d' + \Delta + o(1)}.
    \end{align*}
    By the union bound,
    \begin{equation*}
        \P(A_{V'}) \leq \sum_{j : d_j \geq d'}\binom{n - |V|}{d_j - d'} p_j,
    \end{equation*}
    and Bonferroni's inequality implies
    \begin{align*}
        \P(A_{V'}) &\geq \sum_{j : d_j \geq d'} (1 - o(1)) \binom{n - |V|}{d_j - d'} p_j \left[1 - \sum_{r > j} (1 - o(1)) \binom{n - |V|}{d_r - d'} p_r\right]\\
        &\geq (1 - o(1))\sum_{j : d_j \geq d'}\binom{n - |V|}{d_j - d'} p_j.
    \end{align*}
\end{proof}

We can now analyze the probability specific edge configurations appear in the projected graph.

\begin{theorem}\label{theorem:clique_density}
    Consider the model specified by Definition \ref{def:heterogeneous_random_hypergraph}. Fix a set of edges $E$, and let $V$ be the set of vertices spanned by these edges. Then, with $\pG$ as the combined projected graph due to hyperedges of all degrees,
    \begin{equation*}
        \P\left(E\subset \pG
        \right) = n^{g(E, \mathbf{\Delta}) + o(1)},
    \end{equation*}
    where $\mathbf{\Delta} = (\mathbf{\Delta}_2, \dots, \mathbf{\Delta}_{|V|})$ is set by $\mathbf{\Delta}_j = \max_{r: d_r \geq j} \delta_r$ with the convention that $\mathbf{\Delta}_j = -\infty$ when $j > d_k$, and $g(E, \mathbf{\Delta})$ we now define.
    For a choice of $\mathbf{\Delta}
    = (\mathbf{\Delta}_2, \dots, \mathbf{\Delta}_{|V|})$ with $\mathbf{\Delta}_j < 1$, let
    \begin{equation}
        g(E, \mathbf{\Delta}) \triangleq \max_{\mathcal{U}} \sum_{u \in \mathcal{U}} 1 + \mathbf{\Delta}_{|u|} - |u|,
    \end{equation}
    where the maximum is taken over all valid $\mathcal{U}$ covering $E$, i.e., we require $\mathcal{U} \subset 2^V$ to satisfy
    \begin{itemize}
        \item
        For all $u \in \mathcal{U}$, $|u| \geq 2$ and $E \not\subset \bigcup_{u' \in \mathcal{U}\setminus\{u\}} \binom{u'}{2}$
        \item
        $E \subset \bigcup_{u \in \mathcal{U}} \binom{u}{2}$.
    \end{itemize}
    Given a fixed $E$, there are finitely many candidate coverings $\mathcal{U}$. 
    
    For better control of the probability, we have the lower bound
    \begin{equation*}
        \P(E\subset \pG) \geq (1 - o(1))\prod_{u \in \mathcal{U}} \sum_{j : d_j \geq |u|} \binom{n - |V|}{d_j - |u|} p_j
    \end{equation*}
    for any particular cover $\mathcal{U}$, and the upper bound
    \begin{equation*}
        \P(E\subset \pG) \leq \sum_{\mathcal{U}}\prod_{u \in \mathcal{U}} \sum_{j : d_j \geq |u|} \binom{n - |V|}{d_j - |u|} p_j.
    \end{equation*}
\end{theorem}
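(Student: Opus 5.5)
Both bounds will be proved by decomposing the event $E \subset \pG$ according to which subsets of $V$ are ``traced'' by hyperedges. For each $u \subseteq V$ with $|u|\ge 2$, let $A_u$ be the event of Lemma~\ref{lemma:general_implied_hyperedge_probability} (with $V' = u$): some hyperedge of some degree meets $V$ in exactly $u$. Every hyperedge meeting $V$ in at least two vertices lies in exactly one $A_u$. Hence the edges of $\pG$ inside $V$ are exactly $\bigcup_{u : A_u \text{ occurs}} \binom{u}{2}$, so
\begin{equation*}
\{E \subset \pG\} = \bigcup_{\mathcal{U}} \bigcap_{u \in \mathcal{U}} A_u .
\end{equation*}
Here $\mathcal{U}$ ranges over valid (minimal) covers. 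Any covering family contains a minimal subfamily that still covers $E$, and that subfamily satisfies both conditions in the theorem. The $A_u$ for distinct $u$ are independent because they depend on disjoint sets of hyperedges, as remarked after Lemma~\ref{lemma:general_implied_hyperedge_probability}.

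\emph{Lower bound.} Fix a valid cover $\mathcal{U}$. Then $\P(E\subset\pG) \ge \P(\bigcap_{u\in\mathcal{U}} A_u) = \prod_{u \in \mathcal{U}} \P(A_u)$ by independence. Applying the lower bound in Lemma~\ref{lemma:general_implied_hyperedge_probability} to each factor gives $(1-o(1))\sum_{j: d_j\ge |u|}\binom{n-|V|}{d_j-|u|}p_j$. Since $|\mathcal{U}|$ is bounded, the product of finitely many $(1-o(1))$ factors is still $1-o(1)$. If some $u \in \mathcal{U}$ has $|u| > d_k$, then $A_u$ is impossible, the sum is empty, and the bound is trivially $0$, consistent with $\mathbf{\Delta}_{|u|} = -\infty$.

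\emph{Upper bound.} Apply the union bound over the finitely many valid covers, then independence, then the upper bound of Lemma~\ref{lemma:general_implied_hyperedge_probability} on each $\P(A_u)$.

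\emph{Exponent.} By Lemma~\ref{lemma:general_implied_hyperedge_probability}, $\P(A_u) = n^{1-|u|+\mathbf{\Delta}_{|u|}+o(1)}$, so for each cover $\P(\bigcap_{u \in \mathcal{U}} A_u) = n^{\sum_{u\in\mathcal{U}} (1+\mathbf{\Delta}_{|u|}-|u|) + o(1)}$. All these exponents are negative, since $\mathbf{\Delta}_{|u|} < 1 \le |u|-1$. The number of covers is a constant depending only on $|V|$. So Lemma~\ref{lemma:max_rate_dominates}, applied to the union over covers, yields $n^{g(E,\mathbf{\Delta})+o(1)}$; no independence between different covers is needed.

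Covers containing some $u$ with $|u| > d_k$ contribute probability zero and can be discarded before invoking that lemma. If every cover is of this type, then $g = -\infty$ and the probability is $0$, matching the convention.

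The main point needing care is the set identity $\{E\subset\pG\} = \bigcup_{\mathcal{U}}\bigcap_{u \in \mathcal{U}} A_u$. One must check that restricting to minimal covers loses nothing, which holds because passing to a minimal subfamily preserves the intersection event. One must also check that independence really comes from the disjointness of the sets of hyperedges with a given trace on $V$. The rest is bookkeeping with constants, using that $|V|$ is fixed.
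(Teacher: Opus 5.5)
Your proposal is correct and follows essentially the same route as the paper. You write $\{E\subset\pG\}$ as the union over valid covers of the events $\bigcap_{u\in\mathcal{U}}A_u$; passing to a minimal subfamily is the same step as the paper's greedy removal of redundant hyperedges. You then use independence across distinct traces on $V$, Lemma~\ref{lemma:general_implied_hyperedge_probability} for each factor, the union bound, and Lemma~\ref{lemma:max_rate_dominates}. One minor point: your case where every cover contains a set larger than $d_k$ never occurs, because the individual-edges cover $\mathcal{U}=E$ is always valid and has all $|u|=2\le d_k$.
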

\begin{proof}
    Fix a candidate cover $\mathcal{U}$ such that $|u| \leq d_k$ for all $u \in \mathcal{U}$. For a particular $u \in \mathcal{U}$, let $A_{u}$ denote the event that there exists a hyperedge that contains all vertices in $u$ but none in $V \setminus u$, and let $A_{\mathcal{U}} = \bigcap_{u \in \mathcal{U}} A_u$. Then, Lemma \ref{lemma:general_implied_hyperedge_probability} implies that
    \begin{equation*}
        \P(A_u) = n^{1 - |u| + \mathbf{\Delta}_{|u|} + o(1)},
    \end{equation*}
    so by independence,
    \begin{equation*}
        \P(A_{\mathcal{U}}) = n^{\sum_{u \in \mathcal{U}} 1 - |u| + \mathbf{\Delta}_{|u|} + o(1)}.
    \end{equation*}
    Let $A$ denote the event that all edges $E$ appear in the projected graph, i.e., $A = \{E \subset \psi\}$. Note that $A = \cup_{\mathcal{U}} A_{\mathcal{U}}$. It's clear that $A_{\mathcal{U}} \subset A$ since any random sample in $A_{\mathcal{U}}$ has a projected graph containing all edges in $E$, so $\cup_{\mathcal{U}} A_{\mathcal{U}} \subset A$. Also, for any random sample in $A$, consider all hyperedges that contain at least two nodes in $V$. Greedily drop (in any order) hyperedges that are redundant for covering $E$. Then, the remaining intersections between hyperedges and $V$ represent a valid cover, so $A \subset \cup_{\mathcal{U}} A_{\mathcal{U}}$.
    Finally, since there are finitely many candidate covers $\mathcal{U}$, Lemma \ref{lemma:max_rate_dominates} implies
    \begin{equation*}
        \P(A) = n^{g(E, \mathbf{\Delta}) + o(1)}.
    \end{equation*}
    The lower and upper bounds on $\P(A)$ also follow from similar analysis and Lemma \ref{lemma:general_implied_hyperedge_probability}.
\end{proof}
\begin{remark}
    
It will be convenient to introduce a restricted version of this function
\begin{equation}
    g(E, \mathbf{\Delta}; M) \triangleq \max_{\mathcal{U} : |\mathcal{U}| = M} \sum_{u \in \mathcal{U}} 1 + \mathbf{\Delta}_{|u|} - |u|
\end{equation}
such that
\begin{equation}
    g(E, \mathbf{\Delta}) = \max_{1 \leq M \leq \binom{|V|}{2}} g(E, \mathbf{\Delta}; M).
\end{equation}
It can also be convenient use $\delta^* = \max\{\delta_1, \dots, \delta_k\}$ to bound
\begin{equation*}
    g(E, \mathbf{\Delta}) \leq g(E, \delta^*) \triangleq \max_{\mathcal{U}}\sum_{u \in \mathcal{U}} 1 + \delta^* - |u|,
\end{equation*}
with the maximum taken over covers $\mathcal{U}$ satisfying the additional constraint that $|u| \leq d_k$ for all $u \in \mathcal{U}$.
\end{remark}

\section{Combinatorial Properties of $g(E, \mathbf{\Delta})$}\label{sec:g}
Now, we establish two key Lemmas regarding $g(E, \mathbf{\Delta})$, with proofs deferred to the appendix.
\begin{lemma}\label{lemma:optimal_cover_without_entire_clique}
    Fix vertices $V = \{1, \dots, d\}$ with $d \geq 3$, and fix $0 < \delta < 1$. Let $\mathbf{\Delta}_j = \delta$ for $2 \leq j \leq d - 1$ and $\mathbf{\Delta}_d = -\infty$. Thus, we have implicitly constrained sets in $\mathcal{U}$ to have size at most $d - 1$. 
    Then,
    \[
    g\left(\binom{V}{2}, \mathbf{\Delta}\right) = \max\left\{\delta d - 2d + 3, -\binom{d}{2} + \binom{d}{2}\delta\right\},
    \]
    achieved by the ``$d - 1$ clique plus star'' cover $\mathcal{U} = \{\{1, \dots, d - 1\}\} \cup \{\{1, d\}, \{2, d\}, \dots, \{d - 1, d\}\}$ or the individual-edges cover $\mathcal{U} = \binom{V}{2}$.
\end{lemma}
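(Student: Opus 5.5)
Write $c(u) \triangleq |u| - 1 - \delta > 0$, so that $\sum_{u \in \mathcal{U}} (1 + \delta - |u|) = -\sum_{u} c(u)$. Computing $g\left(\binom{V}{2}, \mathbf{\Delta}\right)$ is then the same as finding the minimum total cost $C^\star$ of a cover of $K_d$ by sets of sizes $2, \dots, d-1$, and the claim becomes
\[
C^\star = \min\left\{2d - 3 - d\delta,\ \binom{d}{2}(1-\delta)\right\}.
\]
The upper bound is a direct evaluation. The ``$d-1$ clique plus star'' cover costs $(d-2-\delta) + (d-1)(1-\delta) = 2d-3-d\delta$. The individual-edges cover costs $\binom{d}{2}(1-\delta)$. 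Both are valid covers in the sense of Theorem~\ref{theorem:clique_density}: no set can be removed, and all sets have size at most $d-1$. The real content is the matching lower bound.

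For the lower bound I would fix a minimum-cost cover $\mathcal{U}$ and let $S \in \mathcal{U}$ be a set of maximum size $s$, with $2 \le s \le d-1$. Put $T = V \setminus S$, so $|T| = d - s \ge 1$. I split the cost as $c(S)$ plus the cost of everything else. The remaining sets must cover every edge meeting $T$, namely the $s(d-s)$ crossing edges and the $\binom{d-s}{2}$ edges inside $T$.

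The key claim is a per-vertex charging bound for those remaining sets. Each remaining set $u$ that meets $T$ has size at most $s$. For each $x \in T$, the sets containing $x$ must cover all $d-1$ edges at $x$. A set $u \ni x$ covers $|u|-1$ of these edges at cost $|u|-1-\delta$. I would charge $c(u)$ to the vertices of $u \cap T$. The target inequality is that the total cost of sets meeting $T$ is at least the cost of the optimal ``pairs or star'' pattern for the edges meeting $T$, namely
\[
C_{\text{rest}}(s) \ \ge\ \min\Bigl\{ \bigl(s(d-s) + \tbinom{d-s}{2}\bigr)(1-\delta),\ \text{cost of covering } T \text{ recursively plus a star} \Bigr\}.
\]
If this fails to come out cleanly, I would instead induct on $d$. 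Removing a vertex $x$ of $T$ from every set leaves a cover of $K_{d-1}$; deleting $x$ drops each set's cost by exactly $1$ and may turn some sets into singletons or duplicates, which I discard. The star of $x$ then needs cost at least $(d-1)(1-\delta)$ or at least $\deg$-many units, and the induction hypothesis applies to the remainder.

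With such a bound, the total cost becomes an explicit function $\Phi(s)$ of $s$ alone: $(s-1-\delta)$ plus a lower bound for covering the edges meeting $T$. I would then check that $\Phi$ is concave, or at least quasi-concave, in $s$ on $[2, d-1]$. Its minimum is then at an endpoint. At $s = 2$ every set is a pair, giving $\binom{d}{2}(1-\delta)$. At $s = d-1$ we get $T = \{d\}$, and the $d-1$ edges at $d$ must be covered by sets of total cost at least $(d-1)(1-\delta)$, since $c(u)/(|u|-1) \ge 1-\delta$. This gives $2d-3-d\delta$, and the lemma follows.

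I expect the per-vertex charging step to be the main obstacle. A set $u$ meeting $T$ in several vertices is charged once but helps several vertices of $T$ simultaneously, and it may also cover edges inside $S$ redundantly. The crude bound $c(u)/(|u|-1) \ge 1-\delta$ per covered edge is too weak when $u$ is large. I would first try the inductive vertex-deletion argument, because it sidesteps double counting: removing $x$ lowers the cost of every set containing it by exactly $1$, and the sets containing $x$ must number at least $\lceil (d-1)/(s-1) \rceil$.
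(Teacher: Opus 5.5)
\textbf{There is a genuine gap: the lower bound, which is the entire content of the lemma, is not established.} Your reformulation as a minimum-cost cover is correct, and so are the upper-bound evaluation and your two endpoint cases. For $s=2$, an irredundant cover by pairs must be all of $\binom{V}{2}$; for $s=d-1$, the bound $c(u)\ge(|u|-1)(1-\delta)$ does the job. The problem is the rest.

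Your primary route does not yet define its key terms. The target inequality for $C_{\text{rest}}(s)$ refers to a ``cost of covering $T$ recursively plus a star'' that is never defined. As a result $\Phi$ is never written down, and its concavity cannot be checked.

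The fallback induction fails as stated. That version deletes $x\in T$, applies the unrestricted hypothesis $C^\star(d-1)$ to the residual cover, and lower-bounds what the sets through $x$ lose. Deleting $x$ saves $1$ per set of size at least $3$ through $x$ but only $1-\delta$ per pair through $x$. So a mixed star saves less than either of your two bounds ($(d-1)(1-\delta)$, or one unit per set).

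Concretely, take $d=8$, $\delta=0.77$ and the cover $S=\{1,\dots,6\}$, $\{1,2,3,4,7,8\}$, $\{5,7\},\{6,7\},\{5,8\},\{6,8\}$. This cover is irredundant, with $s=6$ and $T=\{7,8\}$. Deleting either vertex of $T$ saves only $1+2(0.23)=1.46$. Your chain therefore certifies only $C^\star(7)+1.46=4.83+1.46=6.29$, short of $C^\star(8)=28(0.23)=6.44$. The loss is structural. A large set through $x$ leaves a large set in the residual cover, so the residual cannot be the cheap all-pairs cover, but $C^\star(d-1)$ forgets this.

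\textbf{How the paper closes the gap.} It splits on $M=|\mathcal{U}|$.

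For $M\ge d$ it uses only the edge count $\sum_u\binom{|u|}{2}\ge\binom{d}{2}$. After relaxing sizes to reals, $|u|$ is concave in $\binom{|u|}{2}$. So total size is minimized with $M-1$ pairs and one set absorbing the rest. The resulting bound is concave in $M$, so on $d\le M\le\binom{d}{2}$ it is minimized at the endpoints. These give exactly $2d-3-d\delta$ and $\binom{d}{2}(1-\delta)$.

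For $M\le d-1$ it proves only the clique-plus-star bound, by induction. Suppose every vertex $v$ either lay in at least two pairs or had $V\setminus\{v\}\in\mathcal{U}$. Counting pairs and $(d-1)$-sets would then force $M\ge d$. So some $v$ lies in at most one pair and has $V\setminus\{v\}\notin\mathcal{U}$. Deleting $v$ saves at least $2-\delta$, and the residual has at most $d-1$ sets, each of size at most $d-2$. The recursion closes exactly because $2d-3-d\delta=\bigl(2(d-1)-3-(d-1)\delta\bigr)+(2-\delta)$.

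\textbf{A repair within your own framework.} Your deletion idea can be rescued by choosing the vertex carefully instead of taking an arbitrary $x\in T$. When no set has size $d-1$, delete a vertex lying in zero sets of size at least $3$, or in at least two such sets. The saving is then $(d-1)(1-\delta)$ or at least $2$. Both $C^\star(d-1)+(d-1)(1-\delta)$ and $C^\star(d-1)+2$ are at least $C^\star(d)$; this can be checked directly from the two closed forms.

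If no such vertex exists, every vertex lies in exactly one set of size at least $3$. These sets then partition $V$, and all crossing edges are pairs. Pick a vertex in a smallest part, of size $k\le d/2$; it saves $1+(d-k)(1-\delta)$, which suffices. Some vertex-selection or counting argument of this kind is the missing idea in your proposal.
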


\begin{lemma}\label{lemma:optimal_star_cover}
    Fix vertices $V = \{1, \dots, d\}$ with $d \geq 2$, and suppose $\mathbf{\Delta}$ satisfies $0 < \mathbf{\Delta}_j < 1$ for $2 \leq j \leq d$ and
    \begin{equation*}
        \delta \triangleq \mathbf{\Delta}_2 = \max_{2 \leq j \leq d} \mathbf{\Delta}_j.
    \end{equation*}
    Define the edge set $E = \{(1, j)\}_{j = 2}^d$, representing a star graph configuration. Then, $g(E; \mathbf{\Delta}) = -(d - 1) + (d - 1) \delta$ achieved by the individual edges cover $\mathcal{U} = E = \{\{1, j\}\}_{j = 2}^d$.
\end{lemma}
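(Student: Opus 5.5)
We have to prove Lemma \ref{lemma:optimal_star_cover}: for the star $E=\{(1,j)\}_{j=2}^d$, every valid cover $\mathcal{U}$ satisfies $\sum_{u\in\mathcal{U}} (1+\mathbf{\Delta}_{|u|}-|u|) \le (d-1)(\delta-1)$. Equality is attained by $\mathcal{U}=E$, since each edge then contributes $1+\delta-2=\delta-1$.

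Fix a valid cover $\mathcal{U}$ and bound the objective from above in two steps. First, replace every $\mathbf{\Delta}_{|u|}$ by $\delta=\mathbf{\Delta}_2=\max_j\mathbf{\Delta}_j$. This can only increase the objective, so it suffices to show
\[
\sum_{u\in\mathcal{U}} (|u|-1) \;\ge\; (d-1)(1-\delta) + |\mathcal{U}|\delta ,
\]
which is a rearrangement of $\sum_{u}(1+\delta-|u|)\le (d-1)(\delta-1)$.

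The combinatorial input is a simple count. Each $u\in\mathcal{U}$ covers only star edges of the form $\{1,v\}$ with $v\in u$. If $1\notin u$, then $u$ covers no edge of $E$, which contradicts the minimality condition in the definition of a cover. So $1\in u$ for every $u$, and $u$ covers exactly $|u|-1$ edges of $E$. Since the $d-1$ edges must all be covered, $\sum_u (|u|-1)\ge d-1$. Also $|u|-1\ge 1$ for each $u$, so $\sum_u(|u|-1)\ge |\mathcal{U}|$.

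Now take the convex combination of these two bounds with weights $1-\delta$ and $\delta$. This is legitimate because $0<\delta<1$. It gives
\[
\sum_u(|u|-1) \;\ge\; (1-\delta)(d-1)+\delta|\mathcal{U}| ,
\]
which is exactly the inequality required. Hence $g(E;\mathbf{\Delta})\le (d-1)(\delta-1)$, and together with the individual-edges cover attaining this value, $g(E;\mathbf{\Delta})=-(d-1)+(d-1)\delta$.

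The only delicate point is the observation that every set in $\mathcal{U}$ must contain the center vertex $1$. This follows from the minimality clause in the definition of a valid cover, and it is what rules out "free" sets that cost something but cover nothing. Everything after that is linear bookkeeping. The cost of a set can also be written as $(|u|-1)(1-\delta)+(|u|-2)\delta$, and the convex-combination argument amounts to checking that this per-set cost is at least $(1-\delta)$ times the number of edges the set covers.
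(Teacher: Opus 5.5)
Your proof is correct, and it takes a somewhat different route from the paper. The paper uses an exchange argument. It takes any $u$ with $|u|=\ell\ge 3$, replaces it by its $\ell-1$ star edges, and checks that the objective changes by $(\ell-1)(\delta-1)-(1+\mathbf{\Delta}_\ell-\ell)\ge(\ell-2)\delta\ge 0$. It then prunes newly redundant edges, each of which contributes $\delta-1<0$, and ends at the individual-edges cover.

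You instead bound an arbitrary cover in one shot. You combine the two counting inequalities $\sum_u(|u|-1)\ge d-1$ and $\sum_u(|u|-1)\ge|\mathcal{U}|$ with weights $1-\delta$ and $\delta$. The per-set accounting underneath is the same: your identity $|u|-1-\delta=(1-\delta)(|u|-1)+(|u|-2)\delta$ is exactly the paper's gain of $(\ell-2)\delta$.

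What your version buys is that you never modify the cover, so there are no duplicated edges or intermediate validity to track. You also justify explicitly, via the minimality clause, that every $u\in\mathcal{U}$ contains the center $1$. The paper relies on this without comment when it asserts that $u$ covers $\ell-1$ edges and writes ``without loss of generality $u=\{1,\dots,\ell\}$''.

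What the paper's version buys is constructiveness: it shows how to turn any cover into the optimal one. It also matches the cover-reduction style used in the proof of Lemma~\ref{lemma:optimal_cover_without_entire_clique}.
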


\section{Proof of the Achievability Result, Theorem~\ref{theorem:maximal_clique_recovery}}\label{sec:proof-achievability}
\begin{proof}
    We have that
    \begin{equation*}
        \mathbb{E}\left[\left|\rH_j\right|\right] = \binom{n}{d_j} n^{1 - d_j + \delta_j + o(1)}
    \end{equation*}\ and
    \begin{equation*}
        \E\left[\left|\rH_j \triangle \widehat{\rH}_j\right|\right] = \binom{n}{d_j} \left[\P(A_{\mathrm{FP}}) + \P(A_{\mathrm{FN}})\right],
    \end{equation*}
    where $A_{\mathrm{FP}}$ is the false positive event for estimating $[d_j]$, i.e., $A_{\mathrm{FP}} = \{[d_j] \notin \rH_j\} \cap \{[d_j] \in \widehat{\rH}_j\}$, and $A_{\mathrm{FN}}$ is the false negative event for estimating $[d_j]$, i.e., $A_{\mathrm{FN}} = \{[d_j] \in \rH_j\} \cap \{[d_j] \notin \widehat{\rH}_j\}$. Thus, it's sufficient to show that
    \begin{equation*}
        \P(A_{\mathrm{FP}}) = o(1) \cdot n^{1 - d_j + \delta_j + o(1)}
    \end{equation*}
    and
    \begin{equation*}
        \P(A_{\mathrm{FN}}) = o(1) \cdot n^{1 - d_j + \delta_j + o(1)}.
    \end{equation*}

    First, we analyze the rate of false positives. Note that any hyperedge of size $d_r \geq d_j$ that contains all nodes $[d_j]$ will not contribute to the false positive probability since that would cause $[d_j]$ to not be maximal, or it would mean that $[d_j]$ is a real hyperedge. Thus, the only way for a false positive to be achieved is if a cover $\mathcal{U}$ for $E = \binom{[d_j]}{2}$ with $|\mathcal{U}| \geq 2$ is formed. The probability that $[d_j]$ is not a hyperedge is $1 - o(1)$, and by a minor modification to the proof of Theorem \ref{theorem:clique_density}, the probability that all edges appear in $E$ due to a cover of size $|\mathcal{U}| \geq 2$ is $n^{g(E; \tilde{\mathbf{\Delta}}) + o(1)}$ with $\tilde{\mathbf{\Delta}}_\ell = \max_{r: d_r \geq \ell} \delta_r$ for $\ell < d_j$ and $\tilde{\mathbf{\Delta}}_{d_j} = -\infty$. These events are independent, so
    \begin{equation*}
        \P(A_{\mathrm{FP}}) \le n^{g(E; \tilde{\mathbf{\Delta}}) + o(1)}.
    \end{equation*}
    Now, let $\mathbf{\Delta}_\ell = \delta^*$ for $\ell < d_j$ and $\mathbf{\Delta}_{d_j} = -\infty$. Then, by applying Lemma \ref{lemma:optimal_cover_without_entire_clique},
    \begin{align*}
        \P(A_{\mathrm{FP}}) &\leq n^{g(E; \mathbf{\Delta}) + o(1)}\\
        &= n^{\max\left\{-\binom{d_j}{2} + \binom{d_j}{2} \delta^*, \delta^* d_j - 2 d_j + 3\right\} + o(1)}.
    \end{align*}
    Now, Equation \ref{eqn:key_inequality} is equivalent to
    \begin{equation*}
        -\binom{d_j}{2} + \binom{d_j}{2} \delta^* < 1 - d_j + \delta_j.
    \end{equation*}
    Also,
    \begin{align*}
        \delta^* d_j - 2 d_j + 3 &< 1 - d_j + \frac{2\delta_j}{d_j - 1}\\
        &\leq 1 - d_j + \delta_j
    \end{align*}
    for $d_j \geq 3$. Thus, 
    \begin{equation*}
        \P(A_{\mathrm{FP}}) = o(1) \cdot n^{1 - d_j + \delta_j + o(1)}.
    \end{equation*}

    Now, we analyze the probability of a false negative for $[d_j]$. For a false negative to occur, the hyperedge $[d_j]$ must exist, and all nodes in $[d_j]$ must be connected to an outside node, i.e., we must observe any star formation connecting $[d_j]$ to an external node. These two events are independent. The probability $[d_j]$ is a hyperedge is $p_j = n^{1 - d_j + \delta_j + o(1)}$, and by Theorem \ref{theorem:clique_density} and Lemma \ref{lemma:optimal_star_cover}, the probability we observe a star formation connecting $[d_j]$ to a particular outside node is $n^{-d_j + d_j\delta^* + o(1)}$.
    The probability we observe a star formation connecting $[d_j]$ to any outside node is, by the union bound, at most $n^{1 - d_j + d_j\delta^* + o(1)}$.
    Then, by independence,
    \begin{align*}
        \P(A_{\mathrm{FN}}) &\leq n^{1 - d_j + \delta_j + o(1)} \cdot n^{1 - d_j + d_j\delta^* + o(1)}.
    \end{align*}
    Finally, by Equation \ref{eqn:key_inequality},
    \begin{equation*}
        1 - d_j + d_j\delta^* < -1 + \frac{2\delta_j}{d_j - 1} \leq 0
    \end{equation*}
    for $d_j \geq 3$, so
    \begin{align*}
        \P(A_{\mathrm{FN}}) &\leq o(1)\cdot n^{1 - d_j + \delta_j + o(1)},
    \end{align*}
    completing the proof.
\end{proof}

\section{Conclusion}
We demonstrate that in the heterogeneous hypergraph model, the maximal clique estimator defined by Algorithm \ref{alg:maximal_clique_estimation} recovers hyperedges of degree $d_j$ with $o(1)$ symmetric difference error if the density of the all-edges configuration for cliques of size $d_j$ is dominated by the density of true hyperedges of degree $d_j$. This achieveability result generalizes prior thresholds for recovery in $d$-uniform hypergraphs in a natural way, providing a conceptually simple approach to community detection in this setting.

Here we also list some open problems.
\begin{enumerate}
    \item Converse of achievability. We conjecture that the threshold in Theorem~\ref{theorem:maximal_clique_recovery} is optimal for the recovery of hyperedges with the largest degree. 
    \item The general optimal recovery threshold for heterogeneous hypergraphs. The threshold in Theorem~\ref{theorem:maximal_clique_recovery} is likely not optimal when $\delta_1,\cdots, \delta_k$ is arbitrary and the objective is to recover hyperedges of all different degrees. 
    \item Exact recovery was discussed in \cite{reconstruct} and \cite{bresler2025partial} for $d$-uniform hypergraphs. However, exact recovery for heterogeneous random hypergraphs is not understood, even in the noisy setting where there are only degree-2 and degree-$d$ hyperedges present.
\end{enumerate}

\bibliographystyle{alpha}
\bibliography{ref}

@inproceedings{reconstruct,
  title = 	 "{Thresholds for Reconstruction of Random Hypergraphs From Graph Projections}",
  author =       {Bresler, Guy and Guo, Chenghao and Polyanskiy, Yury},
  booktitle = 	 {Proceedings of Thirty Seventh Conference on Learning Theory},
  pages = 	 {632--647},
  year = 	 {2024},
  volume = 	 {247},
  series = 	 {Proceedings of Machine Learning Research},
  publisher =    {PMLR},
}

@article{bresler2025partial,
  title={Partial and Exact Recovery of a Random Hypergraph from its Graph Projection},
  author={Bresler, Guy and Guo, Chenghao and Polyanskiy, Yury and Yao, Andrew},
  journal={arXiv preprint arXiv:2502.14988},
  year={2025}
}

@article{gong2025detection,
  title={Detection and Reconstruction of a Random Hypergraph from Noisy Graph Projection},
  author={Gong, Shuyang and Li, Zhangsong and Xu, Qiheng},
  journal={arXiv preprint arXiv:2506.17527},
  year={2025}
}

@inproceedings{ACKZ15,
  title={Spectral detection on sparse hypergraphs},
  author={Angelini, Maria C. and Caltagirone, Francesco and Krzakala, Florent and Zdeborov{\'a}, Lenka},
  booktitle={Proceedings of the 53rd Annual Allerton Conference on Communication, Control, and Computing (Allerton)},
  pages={66--73},
  year={2015},
  organization={IEEE}
}

@article{BBN20,
  title={Phase transitions for detecting latent geometry in random graphs},
  author={Brennan, Matthew and Bresler, Guy and Nagaraj, Dheeraj M.},
  journal={Probability Theory and Related Fields},
  volume={178},
  number={3-4},
  pages={1215--1289},
  year={2020}
}

@inproceedings{CFLL24,
  title={Relational learning in pretrained models: A theory from hypergraph recovery perspective},
  author={Chen, Yang and Fang, Cong and Lin, Zhouchen and Liu, Bing},
  booktitle={Proceedings of the 41st International Conference on Machine Learning (ICML)},
  pages={6666--6698},
  year={2024},
  organization={PMLR}
}

@article{DWar,
  title={Optimal and exact recovery on general non-uniform hypergraph stochastic block model},
  author={Dumitriu, Ioana and Wang, Hai-Xiao},
  journal={Annals of Statistics},
  note={to appear}
}

@article{GD17,
  title={Consistency of spectral partitioning of uniform hypergraphs under planted partition model},
  author={Ghoshdastidar, Debarghya and Dukkipati, Ambedkar},
  journal={Annals of Statistics},
  volume={45},
  number={1},
  pages={289--315},
  year={2017}
}

@inproceedings{GJ23,
  title={Community detection in the hypergraph SBM: Exact recovery given the similarity matrix},
  author={Gaudio, Julia and Joshi, Nirmit},
  booktitle={Proceedings of 36th Conference on Learning Theory (COLT)},
  pages={469--510},
  year={2023},
  organization={PMLR}
}

@inproceedings{GP23,
  title={Weak recovery threshold for the hypergraph stochastic block model},
  author={Gu, Yuzhou and Polyansky, Yury},
  booktitle={Proceedings of the 36th Annual Conference on Learning Theory (COLT)},
  pages={885--920},
  year={2023},
  organization={PMLR}
}

@inproceedings{GP24,
  title={Community detection in the hypergraph stochastic block model and reconstruction on hypertrees},
  author={Gu, Yuzhou and Polyansky, Yury},
  booktitle={Proceedings of the 37th Annual Conference on Learning Theory (COLT)},
  pages={2166--2203},
  year={2024},
  organization={PMLR}
}

@article{LYA23,
  title={Hypergraph reconstruction from uncertain pairwise observations},
  author={Lizotte, Simon and Young, Jean-Gabriel and Allard, Antoine},
  journal={Scientific Reports},
  volume={13},
  number={1},
  pages={21364},
  year={2023}
}

@article{New04,
  title={Coauthorship networks and patterns of scientific collaboration},
  author={Newman, Mark E. J.},
  journal={Proceedings of the National Academy of Sciences of the United States of America},
  volume={101},
  number={1},
  pages={5200--5205},
  year={2004}
}

@inproceedings{SZ22,
  title={Sparse random hypergraphs: Non-backtracking spectra and community detection},
  author={Stephan, Ludovic and Zhu, Yizhe},
  booktitle={IEEE 63rd Annual Symposium on Foundations of Computer Science (FOCS)},
  pages={567--575},
  year={2022},
  organization={IEEE}
}

@article{vdHKV21,
  title={Random intersection graphs with communities},
  author={van der Hofstad, Remco and Komj{\'a}thy, J{\'u}lia and Vadon, Vikt{\'o}ria},
  journal={Advances in Applied Probability},
  volume={53},
  number={4},
  pages={1061--1089},
  year={2021}
}

@inproceedings{WK24,
  title={From graphs to hypergraphs: Hypergraph projection and its reconstruction},
  author={Wang, Yanbang and Kleinberg, Jon},
  booktitle={Proceedings of the 12th International Conference on Learning Representations (ICLR)},
  year={2024}
}

@article{YPP21,
  title={Hypergraph reconstruction from network data},
  author={Young, Jean-Gabriel and Petri, Giovanni and Peixoto, Tiago P.},
  journal={Communications Physics},
  volume={4},
  number={1},
  pages={135},
  year={2021}
}

\appendix

\section{Postponed Proofs}
\subsection{Proof of Lemma~\ref{lemma:binom_union_bound_tight_general}}
\begin{proof}
    We have
    \begin{align*}
        \P(X_n \geq 1) &\geq \P(X_n = 1)\\
        &\geq m_n p_n \left(1 - p_n\right)^{m_n}\\
        &= m_n p_n \left(1 - \frac{m_n p_n}{m_n}\right)^{m_n}\\
        &= m_n p_n (1 - o(1)).
    \end{align*}
    Also, the union bound or Markov's inequality yields Equation \ref{eqn:binom_union_bound_tight_general_3}.
\end{proof}

\subsection{Proof of Lemma~\ref{lemma:max_rate_dominates}}
\begin{proof}
    \begin{align*}
        \P\left(\bigcup_{j = 1}^r A_j\right) &\geq \max_{j} \P(A_j)\\
        &\geq n^{\max_{j}{\alpha_j} + o(1)}
    \end{align*}
    and
    \begin{align*}
        \P\left(\bigcup_{j = 1}^r A_j\right) &\leq r \left[\max_{j} \P(A_j)\right]\\
        &\leq n^{\max_{j}{\alpha_j} + o(1)}.
    \end{align*}
\end{proof}

\subsection{Proof of Lemma~\ref{lemma:optimal_cover_without_entire_clique}}
\begin{proof}
    Here, we generalize a Lemma from \cite{reconstruct}, and the proof is similar. We include the reasoning here for completeness, and we will slightly simplify the argument. 

    We can write the maximization as
    \begin{equation*}
        g(E, \mathbf{\Delta}) = \max_{2 \leq M \leq \binom{d}{2}} \max_{\mathcal{U} : |\mathcal{U}| = M} \sum_{u \in \mathcal{U}} 1 + \delta - |u|,
    \end{equation*}
    where we restrict $M \geq 2$ because we can disregard the case $\mathcal{U} = \{V\}$. The inner maximization is taken over all valid covers $\mathcal{U}$ satisfying the usual conditions.
    Now, set
    \begin{align*}
        g(E, \mathbf{\Delta}; M) &= \max_{\mathcal{U} : |\mathcal{U}| = M} \sum_{u \in \mathcal{U} = \{u_1, \dots, u_M\}} 1 + \delta - |u|\\
        &= \max_{\mathcal{U} : |\mathcal{U}| = M} M(1 + \delta) - \sum_{u \in \mathcal{U} = \{u_1, \dots, u_M\}}|u|.
    \end{align*}
    We will further bound this by relaxing $|u_j| = x_j$ to real numbers. Let $y_j = \frac{x_j (x_j - 1)}{2}$, which implies $x_j = \frac{1 + \sqrt{1 + 8y_j}}{2}$. Then, since all $\binom{d}{2}$ edges must be covered by $\mathcal{U}$,
    \begin{align*}
        g(E, \mathbf{\Delta}; M) &\leq \max_{\substack{y_1, \dots, y_M \geq 1\\ \sum_{j = 1}^M y_j \geq \binom{d}{2}}} M(1 + \delta) - \sum_{j = 1}^M \frac{1 + \sqrt{1 + 8y_j}}{2}.
    \end{align*}
    We are maximizing a convex function over a convex region, so the maximum will be achieved at a vertex or at infinity. Since the function approaches $-\infty$ as any $y_j \to \infty$, the maximum must be achieved at a vertex. Each vertex results in the same function value, so select $y_1 = \dots = y_{M - 1} = 1$ and $y_M = \binom{d}{2} - M + 1$ to get
    \begin{equation}\label{eqn:convex_relax_upper_bound}
        g(E, \mathbf{\Delta}; M) \leq M(\delta - 1) + 2 - \frac{1 + \sqrt{1 + 8(\binom{d}{2} - M + 1)}}{2}.
    \end{equation}
    This bound is convex in $M$, so the maximum value of the bound over $d \leq M \leq \binom{d}{2}$ is achieved at $M = d$ or $M = \binom{d}{2}$, which means
    \begin{align*}
        \max_{d \leq M \leq \binom{d}{2}} g(E, \mathbf{\Delta}; M) &\leq \max\left\{d\delta - 2d + 3, -\binom{d}{2} + \binom{d}{2}\delta\right\}.
    \end{align*}
    Since the ``$d - 1$ clique plus star'' cover $\mathcal{U} = \{\{1, \dots, d - 1\}\} \cup \{\{1, d\}, \{2, d\}, \dots, \{d - 1, d\}\}$ achieves value $d\delta - 2d + 3$ and the individual edges cover $\mathcal{U} = E$ achieves value $-\binom{d}{2} + \binom{d}{2}\delta$, it follows that $g(E, \mathbf{\Delta}; d) = d\delta - 2d + 3$ and $g\left(E, \mathbf{\Delta}; \binom{d}{2}\right) = -\binom{d}{2} + \binom{d}{2}\delta$, so 
    \begin{equation}\label{eqn:gmax_geq_d}
        \max_{d \leq M \leq \binom{d}{2}} g(E, \mathbf{\Delta}; M) = \max\left\{d\delta - 2d + 3, -\binom{d}{2} + \binom{d}{2}\delta\right\}.
    \end{equation}
    We will next show that 
    \begin{equation}\label{eqn:hdef}
        h(d) \triangleq \max_{2 \leq M \leq d} g(E, \mathbf{\Delta}; M) = d\delta - 2d + 3,
    \end{equation}
    by induction on $d$. When $d = 3$, the only valid cover is $\mathcal{U} = \{\{1, 2\}, \{1, 3\}, \{2, 3\}\}$, achieving the value $3\delta - 3$. Now, assuming $h(d - 1) = (d - 1) \delta - 2(d - 1) + 3$, we wish to show that $h(d) = d\delta - 2d + 3$. Consider any candidate cover $\mathcal{U}$ of the clique spanned by $V_d = \{1, \dots, d\}$ such that $2 \leq |\mathcal{U}| \leq d - 1$. The case $|\mathcal{U}| = d$ is already handled by $g(E, \mathbf{\Delta}; d) = d\delta - 2d + 3$. We will first argue that we can assume there exists a vertex $v \in V_d$ satisfying
    \begin{itemize}
        \item $V_d \setminus \{v\} \notin \mathcal{U}$
        \item $|\{u \in \mathcal{U} : |u| = 2, v \in u\}| \leq 1$.
    \end{itemize}
    Suppose by way of contradiction that such a vertex $v$ does not exist, i.e., each $v \in V_d$ has either $V_d \setminus \{v\} \in \mathcal{U}$ or $|\{u \in \mathcal{U} : |u| = 2, v \in u\}| \geq 2$. Let $T$ denote the set of nodes $v \in V_d$ satisfying $|\{u \in \mathcal{U} : |u| = 2, v \in u\}| \geq 2$, and $W$ denote the set of nodes $v \in V_d$ satisfying $V_d \setminus \{v\} \in \mathcal{U}$. We have that $|W| + |T| \geq d$. Let $\ell = |\{u \in \mathcal{U}: |u| = 2\}$ be the number of size-2 sets in $\mathcal{U}$. The total degree spanned by all these edges is $2\ell$, and the total size-2 set degree associated with nodes in $T$ is at least $2|T|$, so $\ell \geq |T|$. Since
    \begin{equation*}
        d \leq |T| + |W| \leq \ell + |W| \leq |\mathcal{U}| \leq d - 1,
    \end{equation*}
    we arrive at a contradiction.
    



    Returning, for a vertex $v$ satisfying the above conditions, consider the cover $\mathcal{U}' \triangleq \{u \setminus \{v\} : u \in \mathcal{U}, |u \setminus \{v\}| \geq 2\}$ of the clique spanned by $V_d \setminus \{v\}$. Notice that $2 \leq |\mathcal{U}'| \leq d - 1$ since $V_d \setminus \{v\} \notin \mathcal{U}$. Then,
    \begin{align*}
        \left[\sum_{u \in \mathcal{U}} 1 + \delta - |u|\right] - \left[\sum_{u \in \mathcal{U}'} 1 + \delta - |u|\right]
        &= \left[\sum_{u \in \mathcal{U}, v \in u, |u| > 2} -1\right] + \left[\sum_{u \in \mathcal{U}, v \in u, |u| = 2} -1 + \delta\right]\\
        &= -|\{u \in \mathcal{U} : v \in u\}| + \delta \cdot |\{u \in \mathcal{U} : |u| = 2, v \in u\}|\\
        &\leq -2 + \delta,
    \end{align*}
    where we applied $|\{u \in \mathcal{U} : |u| = 2, v \in u\}| \leq 1$ and $|\{u \in \mathcal{U} : v \in u\}| \geq 2$ to achieve the last inequality. (If $|\{u \in \mathcal{U} : v \in u\}| = 1$, then that single element $u$ would necessarily equal $V_d$, which would be a contradiction.) Finally, since $2 \leq |\mathcal{U}'| \leq d - 1$, the induction hypothesis implies
    \begin{equation*}
        \left[\sum_{u \in \mathcal{U}'} 1 + \delta - |u|\right] \leq (d - 1)\delta - 2(d - 1) + 3 = d\delta -2d + 3 + 2 - \delta,
    \end{equation*}
    so
    \begin{align*}
        \left[\sum_{u \in \mathcal{U}} 1 + \delta - |u|\right] \leq d \delta - 2d + 3,
    \end{align*}
    proving Equation \ref{eqn:hdef}. Combining Equation \ref{eqn:gmax_geq_d} and Equation \ref{eqn:hdef} yields the desired result. 
\end{proof}

\subsection{Proof of Lemma~\ref{lemma:optimal_star_cover}}
\begin{proof}
    The case $d = 2$ is immediate. When $d \geq 3$, we can greedily reduce any cover $\mathcal{U}$ to individual edges to achieve a maximum value. Consider any $u \in \mathcal{U}$, with $\ell = |u| \geq 3$. Then, $u$ covers $\ell - 1$ edges. Without loss of generality, say that $u = \{1, \dots, \ell\}$. If we replace $u$ with the collection of edges $\{1, j\}$ for $2 \leq j \leq \ell$, then the objective value
    \begin{equation*}
        \sum_{u \in \mathcal{U}} 1 + \mathbf{\Delta}_{|u|} - |u|
    \end{equation*}
    increases by
    \begin{align*}
        (\ell - 1)(-1 + \delta) - (1 + \mathbf{\Delta}_\ell - \ell) &\geq (\ell - 1)(-1 + \delta) - (1 + \delta - \ell)\\
        &= (\ell - 2) \delta\\
        &\geq 0.
    \end{align*}
    Removing any newly redundant edges only further increases the objective since each edge contributes $-1 + \delta < 0$. Hence, we can maximize $g(E; \mathbf{\Delta})$ by converting to the all-edges cover, which implies $g(E; \mathbf{\Delta}) = -(d - 1) + (d - 1) \delta$.
\end{proof}

\end{document}